\newcommand{\bra}[1]{{\left\langle{#1}\right\vert}}
\newcommand{\ket}[1]{{\left\vert{#1}\right\rangle}}
\newcommand{\qw}[1][-1]{\ar @{-} [0,#1]}
\newcommand{\qwx}[1][-1]{\ar @{-} [#1,0]}
\newcommand{\cwx}[1][-1]{\ar @{=} [#1,0]}
\newcommand{\gate}[1]{*+<.6em>{#1} \POS ="i","i"+UR;"i"+UL **\dir{-};"i"+DL **\dir{-};"i"+DR **\dir{-};"i"+UR **\dir{-},"i" \qw}
\newcommand{\measureD}[1]{*{\xy*+=<0em,.1em>{#1}="e";"e"+UR+<0em,.25em>;"e"+UL+<-.5em,.25em> **\dir{-};"e"+DL+<-.5em,-.25em> **\dir{-};"e"+DR+<0em,-.25em> **\dir{-};{"e"+UR+<0em,.25em>\ellipse^{}};"e"+C:,+(0,1)*{} \endxy} \qw}
\newcommand{\control}{*!<0em,.025em>-=-<.2em>{\bullet}}
\newcommand{\ctrl}[1]{\control \qwx[#1] \qw}
\newcommand{\targ}{*+<.02em,.02em>{\xy ="i","i"-<.39em,0em>;"i"+<.39em,0em> **\dir{-}, "i"-<0em,.39em>;"i"+<0em,.39em> **\dir{-},"i"*\xycircle<.4em>{} \endxy} \qw}
\newcommand{\qswap}{*=<0em>{\times} \qw}
\newcommand{\multigate}[2]{*+<1em,.9em>{\hphantom{#2}} \POS [0,0]="i",[0,0].[#1,0]="e",!C *{#2},"e"+UR;"e"+UL **\dir{-};"e"+DL **\dir{-};"e"+DR **\dir{-};"e"+UR **\dir{-},"i" \qw}
\newcommand{\ghost}[1]{*+<1em,.9em>{\hphantom{#1}} \qw}
\newcommand{\push}[1]{*{#1}}
\newcommand{\lstick}[1]{*!R!<.5em,0em>=<0em>{#1}}
\newcommand{\Qcircuit}{\xymatrix @*=<0em>}
\newcommand{\code}[1]{\mathcal{#1}}
\newcommand{\C}[1]{{{\vphantom{#1}}^{C}\hspace{-.3em}{#1}}}
\newcommand{\CX}{\C{X}}
\newcommand{\CY}{{{\vphantom{Y}}^{C}\hspace{-.1em}{Y}}}
\newcommand{\CZ}{{{\vphantom{Z}}^{C}\hspace{-.2em}{Z}}}
\newcommand{\CH}{\C{H}}
\newcommand{\opfunc}[1]{{\mathcal{#1}}}
\newcommand{\classicallogicgate}[1]{{\texttt{#1}}}
\definecolor{LightCyan}{rgb}{0.45,1,1}
\newtheorem{theorem}{Theorem}[section]
\newtheorem{lemma}[theorem]{Lemma}
\newtheorem{corollary}[theorem]{Corollary}
\renewcommand{\cD}{{\code{D}}}
\newcommand{\SWAP}{{
  \begin{pgfpicture}{0cm}{0cm}{.265cm}{.226cm}
  \pgfsetlinewidth{.8pt}
  \pgfxyline(0,.226)(.1,.226)
  \pgfxyline(0,.0145)(.1,.0145)
  \pgfxyline(.165,.226)(.265,.226)
  \pgfxyline(.165,.0145)(.265,.0145)
  \pgfxyline(.0875,.235)(.1775,.0055)
  \pgfxyline(.0875,.0055)(.1775,.235)
  \end{pgfpicture}
  }}
\begin{document}

\title{Magic-state distillation with the four-qubit code}

\author{Adam M. Meier}
\email[]{Adam.Meier@colorado.edu}
\affiliation{University of Colorado, Boulder, CO}
\affiliation{National Institute of Standards and Technology, Boulder, CO}
\author{Bryan Eastin}
\affiliation{Northrop Grumman Corporation, Baltimore, MD}
\author{Emanuel Knill}
\affiliation{University of Colorado, Boulder, CO}
\affiliation{National Institute of Standards and Technology, Boulder, CO}

\date{\today}

\begin{abstract}
The distillation of magic states is an often-cited technique for
enabling universal quantum computing once the error probability for a
special subset of gates has been made negligible by other means. We
present a routine for magic-state distillation that reduces the
required overhead for a range of parameters of practical
interest. Each iteration of the routine uses a four-qubit
error-detecting code to distill the $+1$ eigenstate of the Hadamard
gate at a cost of ten input states per two improved output
states. Use of this routine in combination with the $15$-to-$1$ distillation
routine described by Bravyi and Kitaev allows for further improvements
in overhead.
\end{abstract}

\pacs{}

\maketitle

Many techniques for robustly implementing quantum gates most naturally
generate only a finite subset of the unitary operators.  Frequently, the
naturally convenient quantum operations generate the full set of
Clifford operations, which consists of the Clifford group of unitaries
augmented by measurement and state preparation in the standard
basis. Clifford operations are sufficient for stabilizer-state
preparations and measurements and thus underlie stabilizer-based error
correction and much of the associated theory of fault tolerance.
Though inadequate for universal quantum computing, the Clifford
operations can be supplemented by any unitary outside of the Clifford 
group to obtain a
universal set~\cite{Nebe2001}.  Consequently, the problem of achieving
universality is often reduced to that of finding a way of robustly
implementing a single non-Clifford unitary gate. 

Given the ability to perform Clifford operations, non-Clifford gates
can be indirectly implemented using certain non-stabilizer states as a
consumable resource.  The advantage of this approach lies in the
possibility of distilling such resource states prior to use.
Distillation is a technique whereby a collection of
independently prepared faulty resource states can be converted into a
smaller number of resource states whose fidelity with respect to the
ideal state is higher.  Some states have the property that one can
distill them using only Clifford operations. States that are both
sufficient for universality and distillable in this way are known as
\textit{magic states}.  Magic-state distillation allows faulty magic
states to be used as a resource for robust universal quantum
computing.

The notion of magic states was introduced by Bravyi and
Kitaev~\cite{BravyiKitaev2005}, who showed that the (magic) 
eigenstates of the
one-qubit Clifford gates $T$ and $H$ can be distilled from copies of
these states with error probabilities of up to $0.173$ and
$0.141$ per state, respectively.  Their distillation routines work
by projecting several such faulty copies of a specified magic state
(henceforth, resource state) into a stabilizer code and then decoding
the result, checking for and discarding on any indication of error.
Distillation of the $T$-eigenstate $\ket{T}$ employs a projection onto
the $5$-qubit distance-$3$ code, while distillation of the $H$-eigenstate $\ket{H}$ relies on the $15$-qubit Reed-Muller code;
both distillation routines result in one improved resource state.  We
refer to the $\ket{H}$-distillation routine as the $15$-to-$1$
routine.

An apparently distinct routine for distilling $\ket{H}$ using the
$7$-qubit Steane code was proposed previously by
Knill~\cite{Knill2004}, but Reichardt found the two routines to be
equivalent~\cite{Reichardt2005}.  Reichardt additionally showed that 
the error threshold for distilling $\ket{H}$ could be improved from
$0.141$ to $0.146$ via a $7$-to-$1$ distillation routine, thereby proving that every faulty $\ket{H}$
outside of the set of stabilizer states is distillable with a finite routine. 
Campbell and Browne proved the
impossibility of a similar result for $\ket{T}$ by showing that no finite distillation
routine is capable of distilling faulty $\ket{T}$ arbitrarily near the boundary of
the stabilizer states~\cite{CampbellBrowne2009,CampbellBrowne2010}.

The focus of each of the aforementioned papers is on the threshold for
magic-state distillation, but the efficiency of a distillation routine
is crucial to its practical utility.  Of particular concern is the
number of faulty resource states required as input to distill each
resource state of some desired quality.  This ratio contributes
strongly to the overhead required to implement a quantum computation
using magic-state distillation~\cite{Jones2010}, potentially
increasing the number of qubits and gates required by a large
multiplicative factor.  With this in mind, we describe a routine for
distilling $\ket{H}$ that reduces the number of input resource states
required per output state, distilling $2$ improved resource states
from $10$.  The routine can be used either solely or in combination
with previously developed routines to obtain resource reductions for a
variety of parameter ranges of interest.

After explaining the needed background in Sec.~\ref{sec:background},
we introduce and analyze the proposed distillation routine in 
Secs.~\ref{sec:HState} and \ref{sec:analysis} and compare it to the
$15$-to-$1$ routine in Sec.~\ref{sec:comparativePerformance}.
Sec.~\ref{sec:distillationSequences} explains how sequential distillation rounds can be combined.  Concluding remarks appear in
Sec.~\ref{sec:conclusions}.

\section{Basic Concepts and Notation\label{sec:background}}

We denote the one-qubit Pauli operators by $I$, $X$, $Y$, and $Z$, where $I$ is
the identity and the others are $i$ times the conventional
$\pi$-rotations associated with the eponymous axes of the Bloch
sphere.  The $n$-qubit Pauli group consists of all $n$-fold tensor
products of the one-qubit Pauli operators multiplied by $\{\pm 1, \pm i\}$.

The Clifford group consists of the unitary operators that normalize
the Pauli group.  Any Clifford unitary can be constructed 
by composition of tensor products of $H$ (Hadamard), $T$, and $\CX$
(controlled-\classicallogicgate{NOT} or controlled-$X$) gates up
to an unimportant global phase.  In the
standard basis, these gates are given by
\begin{align*}
H = \frac{1}{\sqrt{2}}\left( \begin{array}{cc}
	1 & 1 \\
	1 & -1
\end{array} \right ) \;,&&
T = \frac{e^{i \pi/4}}{\sqrt{2}}\left( \begin{array}{cc}
	1 & 1 \\
	i & -i
\end{array} \right ) \; , 
\end{align*}
and
\begin{equation*}
\CX = \left( \begin{array}{cccc}
	1 & 0 & 0 & 0 \\
	0 & 1 & 0& 0 \\
	0 & 0 & 0& 1 \\
	0 & 0 & 1 & 0 \\
\end{array} \right ) \;.
\end{equation*}
For convenience, we additionally employ the following one- and
two-qubit Clifford gates: $P(\frac{\pm\pi}{2})=e^{\mp i P\pi/4}$ where
$P\in\{X,Y,Z\}$ ($\pi/2$ rotations about the axes of the Bloch
sphere), $S=e^{i \pi/4} Z(\frac{\pi}{2})$, $\CZ$ (controlled-$Z$),
$\CY$ (controlled-$Y$), and $\SWAP$ (\classicallogicgate{SWAP}).  We
also make frequent use of three unitaries not contained in the Clifford group: $Y(\frac{\pm
  \pi}{4})$ (the $\pm \pi/4$ rotations about the $Y$ axis) and $\CH$
(controlled-$H$).

We use the term ``Clifford operation'' to refer to any quantum 
operation that can be implemented using Clifford unitaries together
with preparation and measurement in the standard basis.  States that
can be prepared with Clifford operations are known as stabilizer
states.  Pure stabilizer states are $+1$ eigenstates of a complete set
of commuting (generally multi-qubit) Pauli operators; this set of
Pauli operators is known as the stabilizer generator.  Similarly, one
can define a quantum (stabilizer) code as the $+1$ eigenspace of a
non-maximal set of stabilizers.

Whenever necessary, subscripts on operators are used to identify the
qubits that they act upon.  A bar over an operator indicates that it
is a logical (or encoded) operator. That is, it acts as the specified
operator on qubits encoded in a quantum code.

Quantum circuit diagrams in this paper conform to the notation used in
Ref.~\cite{Nielsen2001} with the following exceptions: Wires
representing multiple qubits are not specially decorated, and the
symbols
\begin{align*}
\raisebox{.75em}{
\Qcircuit @C=1em @R=1em {
& \ctrl{1} & \qw \\
& \control \qw & \qw
}
}\;,
&&
\Qcircuit @C=1em @R=1em {
& \measureD{X}
}
\;,
&&
\text{and}
&&
\Qcircuit @C=1em @R=1em {
& \measureD{Z}
}
\end{align*}
are used to represent the $\CZ$ gate and projective measurements in
the eigenbases of $X$ and $Z$, respectively.

\begin{figure}
\begin{tabular}{ll}
(a) &
\raisebox{0em}{ 
\Qcircuit @C=.5em @R=1em {
& & & & \lstick{\ket{+}} & \ctrl{1}  & \measureD{X} \\
& \qw & \qw & \qw & \qw & \gate{H} & \qw & \qw
}
} \\ \\
(b) &
    \raisebox{0em}{
    \Qcircuit @C=.5em @R=1em {
      & \ctrl{1} & \qw \\
      \push{\rule{0em}{1.6em}} & \gate{H} & \qw
    }
    }
    \raisebox{-1.2em}{\rule{1em}{0em}$=$\rule{.8em}{0em}}
    \raisebox{0em}{
    \Qcircuit @C=.5em @R=1em {
      & \qw & \ctrl{1} & \qw & \qw \\
      & \gate{Y(\frac{-\pi}{4})} & \control \qw & \gate{Y(\frac{\pi}{4})} & \qw
    }
    }
    \\ \\
(c) &
\raisebox{-1.3em}{
\Qcircuit @C=.5em @R=1em {
& \gate{Y(\frac{\pm \pi}{4})} & \qw
} }
\raisebox{-1.5em}{\rule{1em}{0em}$=$\rule{.8em}{0em}}
\raisebox{0em}{
\Qcircuit @C=.5em @R=1em {
& & & &\lstick{\ket{H}} & \ctrl{1} & \measureD{\pm Y} \cwx[1] \\
& \qw & \qw & \qw & \qw & \gate{Y} & \gate{ Y(\frac{\pm\pi}{2})} & \qw
}
}
\end{tabular}
 \caption{Circuits showing that the Hadamard operator can be measured
   non-destructively using only Clifford operations and two copies of
   $\ket{H}$.  The circuit in (a) implements a non-destructive
   measurement of the Hadamard operator using the non-Clifford $\protect\CH$ gate, while (b) and (c) give circuit
   identities that can be used to break the measurement up into
   Clifford operations and $\ket{H}$ resource states.  The classical control
   in (c) is meant to indicate (for the $+$ case) that a positive Y measurement
   triggers a $Y(\frac{\pi}{2})$ gate while a negative measurement triggers the 
   identity. \label{fig:littleCircuits}
 }
\end{figure}
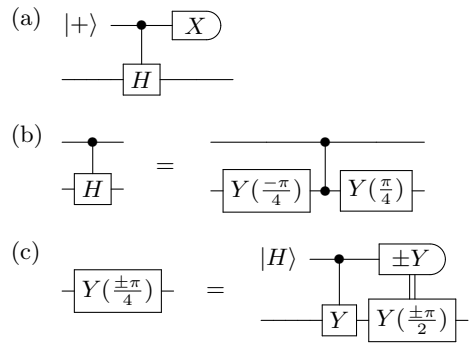

\begin{figure*}
\begin{tabular}{ll}
(a) & 
\raisebox{0em}{
\Qcircuit @C=.7em @R=.7em {
& & & \lstick{\ket{+}} & \ctrl{1} & \ctrl{2} & \measureD{X} \\
& \qw & \qw & \qw & \gate{H} & \qw & \qw & \qw \\
& \qw & \qw & \qw & \qw & \gate{H} & \gate{H} & \qw
}}
\raisebox{-2.2em}{\rule{1.6em}{0em}$=$\rule{1.6em}{0em}} 
\begin{pgfpicture}{0cm}{0cm}{0cm}{0cm}
\color{green}
\pgfrect[fill]{\pgfpoint{.9cm}{.2cm}}{\pgfpoint{1.75cm}{-1.75cm}}
\pgfrect[fill]{\pgfpoint{3.37cm}{.2cm}}{\pgfpoint{1.75cm}{-1.75cm}}
\color{LightCyan}
\pgfrect[fill]{\pgfpoint{2.65cm}{.2cm}}{\pgfpoint{.72cm}{-1.75cm}}
\end{pgfpicture}
\Qcircuit @C=.7em @R=.7em {
& & & \lstick{\ket{+}} & \ctrl{1} & \ctrl{2} & \ctrl{1} & \qw & \ctrl{1} & \ctrl{2} & \ctrl{1} & \measureD{X} \\
& \qw & \qw & \qw & \gate{H} & \qw & \qswap \qwx[1] & \qw  & \gate{H} & \qw & \qswap \qwx[1] & \qw & \qw \\
& \qw & \qw & \qw & \qw & \gate{H} & \qswap & \gate{H} & \qw & \gate{H} & \qswap & \qw & \qw \\
} \\
\\
(b) &
\begin{pgfpicture}{0cm}{0cm}{0cm}{0cm}
\color{green}
\pgfrect[fill]{\pgfpoint{.9cm}{.2cm}}{\pgfpoint{2.87cm}{-3.1cm}}
\pgfrect[fill]{\pgfpoint{6.97cm}{.2cm}}{\pgfpoint{2.87cm}{-3.1cm}}
\color{LightCyan}
\pgfrect[fill]{\pgfpoint{3.77cm}{.2cm}}{\pgfpoint{3.2cm}{-3.1cm}}
\end{pgfpicture}
\Qcircuit @C=.7em @R=.7em {
& & & \lstick{\ket{+}} & \ctrl{1} & \ctrl{2} & \ctrl{3} & \ctrl{4} & \qw & \qw & \qw & \qw & \qw & \ctrl{1} & \ctrl{2} & \ctrl{3} & \ctrl{4} & \measureD{X} \\
& \qw & \qw & \qw & \gate{H} & \qw & \qw & \qw & \qw & \qw & \qw & \qw & \qw & \gate{H} & \qw  & \qw & \qw & \qw  & \qw  \\
& \qw & \qw & \qw & \qw & \gate{H} & \qw & \qw  & \gate{H} & \gate{S} & \ctrl{2} & \ctrl{1} & \gate{H} & \qw & \gate{H} & \qw & \qw & \qw & \qw  \\
& \qw & \qw & \qw & \qw & \qw & \gate{H} & \qw & \gate{H}& \qw & \qw & \gate{Y} & \gate{Y} & \qw & \qw & \gate{H} & \qw & \qw & \qw  \\
& \qw & \qw & \qw & \qw & \qw & \qw & \gate{H} & \qw & \gate{S^{\dagger}} & \ctrl{-2} & \qw & \ctrl{-1} & \qw & \qw & \qw & \gate{H} & \qw & \qw  
}\\
\\
&
\raisebox{-4.2em}{\rule{12em}{0em}$=$\rule{1.6em}{0em}} 
\Qcircuit @C=.7em @R=.7em {
& & & \lstick{\ket{+}} & \ctrl{2} & \ctrl{4} & \qw & \qw & \qw & \ctrl{2} & \qw & \qw & \ctrl{4} & \ctrl{2} & \ctrl{4} & \measureD{X} \\
\push{\rule{0em}{1.3em}} & \qw & \qw & \qw & \qw & \qw & \qw & \qw & \qw & \qw & \qw & \qw & \qw & \qw & \qw & \qw & \qw  \\
& \qw & \qw & \qw & \gate{H}& \qw  & \gate{H} & \gate{S} & \ctrl{2} & \ctrl{-2} & \ctrl{1} & \gate{H} & \qw & \gate{H} & \qw & \qw & \qw \\
& \qw & \qw & \qw & \qw & \qw & \gate{H}& \qw & \qw & \qw & \gate{Y} & \gate{Y} & \qw & \qw & \qw & \qw & \qw \\
& \qw & \qw & \qw & \qw & \gate{H} & \qw & \gate{S^{\dagger}} & \ctrl{-2} & \qw & \qw & \ctrl{-1} & \control \qw & \qw & \gate{H} & \qw & \qw \\}
\end{tabular}
\caption{Circuits that detect whether two input states are in the
  subspace spanned by $\ket{H}\ket{-H}$ and $\ket{-H}\ket{H}$, when
  the input states are (a) unencoded and (b) encoded in the four-qubit
  code.  Corresponding blocks of the circuits in parts (a) and (b) are
  indicated by shading.  The translation from (a) to (b) relies on the
  fact that, for the four-qubit code, transversal Hadamard effects
  $\bar{H}_1 \bar{H}_2 \, \bar{\protect\SWAP}_{12}$.  The
  equivalence in b) eliminates four $\protect\CH$ gates, reducing
  the number of resource states required by eight.  In total,
  this figure shows that only four $\protect\CH$ gates are required to
  project the two qubits encoded in the four-qubit code into either the
  subspace spanned by $\{\ket{H}\ket{-H},\ket{-H}\ket{H}\}$ or that
  spanned by $\{\ket{H}\ket{H},\ket{-H}\ket{-H}\}$.  The
  circuits shown also apply an incidental Hadamard gate to the
  second logical qubit. Further details are given in
  Fig.~\ref{fig:CircuitDetails} in the
  appendix. \label{fig:logicalMeas}}
\end{figure*}
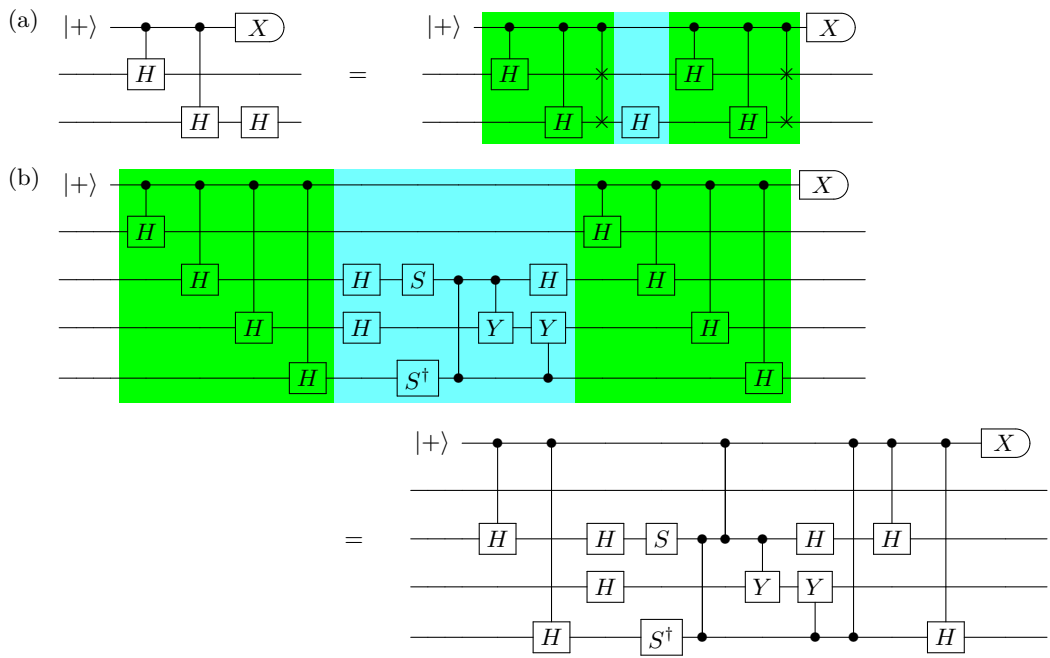

We denote the $+1$ ($-1$) eigenstate of the Hadamard gate by $\ket{H}$
($\ket{-H}$), where $\ket{H} = \cos(\frac{\pi}{8})\ket{0} +\,
\sin(\frac{\pi}{8})\ket{1} = Y(\frac{\pi}{4})\ket{0}$.  $\ket{H}$ is
not a stabilizer state, so Clifford operations are not sufficient for its preparation; 
however, they are sufficient for its distillation \cite{BravyiKitaev2005}.
As shown in Fig.~\ref{fig:littleCircuits}(c), $\ket{H}$ can be used together with
Clifford operations to implement the non-Clifford gate
$Y(\frac{\pi}{4})$, so $\ket{H}$ is a magic state.

Distillation is the process of converting multiple faulty copies of a
desired (resource) state into, typically fewer, improved copies of the
state.  In magic-state distillation, Clifford operations are used to
project faulty magic states into a subspace. Given input states of
suitable quality and a well-chosen subspace, successful projection
allows one to extract higher-fidelity copies of the desired state. When failure
to project is detected, the output states are discarded.  It is
assumed that Clifford operations can be implemented perfectly, which
is justified in the commonly considered situation where fault-tolerant
techniques provide highly accurate Clifford operations as a matter of
course but rely on techniques such as magic-state distillation for
universality.

State distillation is facilitated by randomization, which can be used
to simplify errors on resource states.  For example, the twirling
superoperator
\begin{align}
  \opfunc{H}(\rho) =\frac{1}{2} \rho + \frac{1}{2}H \rho H^\dagger \;,
\end{align}
which can be implemented by applying either $I$ or $H$ with equal
probability, decoheres an input state in the eigenbasis of the $H$
operator.  For any input state, the resulting state is a
probabilistic mixture of $\ket{H}$ and $\ket{-H}$ states. That is, for
some $0\leq p \leq 1$,
\begin{align*}
  \opfunc{H}(\rho) = (1-p) &\ket{H}\bra{H} + p \ket{-H}\bra{-H} \;.
\end{align*}
Because $Y \ket{H} = \ket{-H}$, we can characterize any faulty $\ket{H}$
state that has been twirled with $\opfunc{H}$ as suffering from
stochastic $Y$ errors with some probability $p$ that depends on the input 
state $\rho$.  We assume throughout
this paper that resource states are twirled prior to use.

We label distillation routines by their input/output ratios, so an
$m$-to-$n$ distillation routine takes $m$ resource states as input and
produces $n$ resource states as output.

\section{$10$-to-$2$ Distillation Routine\label{sec:HState}}

The basic form of our routine for magic-state distillation is as
follows: Resource states are encoded into a quantum code; these
encoded resource states are verified through an encoded measurement;
and finally the code is decoded, leaving, when no errors are
indicated, resource states of better quality.  The intuition behind
this approach is that one would like simply to measure whether a
resource state is good, but doing so requires additional resource
states whose own errors might go undetected in such a measurement.
Errors on these states are rendered detectable by performing an
encoded version of the measurement in a fault-tolerant fashion.  This
is the approach employed in reference~\cite{Knill2004} for distilling
$\ket{H}$ using the $7$-qubit Steane code. The routine described here
is instead based on the $4$-qubit error-detecting code.

As the $+1$ eigenstate of the Hadamard operator, the state $\ket{H}$
can be verified by measuring $H$.  Measurement of the Hadamard
operator is impossible using only Clifford operations, but it can be
accomplished, as shown in Fig.~\ref{fig:littleCircuits}, with the help
of two additional $\ket{H}$ states.

To render errors during the Hadamard measurement detectable, the
routine first encodes a pair of faulty resource states into the
$\code{C}_4$ code~\cite{Knill2004} and then performs an encoded measurement $\bar{H}_1
\bar{H}_2$ on the pair. This measurement determines
whether the pair is in the logical subspace spanned by
$\ket{H}\ket{-H}$ and $\ket{-H}\ket{H}$ and can therefore detect whether
one of the states had an error (see Sec.~\ref{sec:analysis}).

The $\code{C}_4$ code is a $[[4,2,2]]$ quantum code defined by the
stabilizer generator matrix:
\begin{align}
  \left[
    \begin{tabular}{c}
      $X\otimes X\otimes X\otimes X$\\
      $Z\otimes Z\otimes Z\otimes Z$
    \end{tabular}
  \right]\;.
\end{align}
Our choices for logical $X$ and $Z$ operators are:
\begin{align*}
  \bar{X}_1&=X\otimes X\otimes I\otimes I, \\
  \bar{Z}_1&=Z\otimes I\otimes I\otimes Z, \\
  \bar{X}_2&=X\otimes I\otimes I\otimes X, \;\;\; \text{and} \\
  \bar{Z}_2&=Z\otimes Z\otimes I\otimes I.
\end{align*}
Because any one-qubit Pauli operator anticommutes with some stabilizer
generator of $\code{C}_4$, it is possible to detect any error on a
single qubit of the code.

The set of stabilizer generators of $\code{C}_4$ is symmetric with
respect to exchange of $X$ and $Z$, so $H\otimes H\otimes H\otimes H$
is a valid encoded gate, and for the choice of logical Pauli operators
given above it effects a logical Hadamard on both encoded qubits
followed (or, equivalently, preceded) by a logical \classicallogicgate{SWAP}.
Consequently, the controlled-$(\bar{H}_1 \bar{H}_2 \, \bar{\SWAP}_{12})$ gate (the
control is unencoded and the target is encoded in $\code{C}_4$) can be
accomplished using a sequence of four $\CH$ gates.  Using this gate
one can derive a circuit that implements the encoded measurement, $\bar{H}_1\bar{H}_2$, as shown in Fig.~\ref{fig:logicalMeas}, by means of four
$\CH$ gates implemented with a total of eight resource states.

\begin{figure}
\begin{tabular}{ll}
(a) & 
\Qcircuit @C=.7em @R=.7em {
 & & & \lstick{\ket{H}} & \gate{Y} & \ctrl{1} & \measureD{\pm Y} \cwx[1] \\
 & \qw & \qw & \qw & \qw & \gate{Y} & \gate{Y(\frac{\pm \pi}{2})} & \qw }
\raisebox{-1.3em}{\rule{.8em}{0em}$=$\rule{.5em}{0em}}
\Qcircuit @C=.7em @R=.7em {
& & & \lstick{\ket{H}} & \ctrl{1} & \measureD{\pm Y} \cwx[1] \\
& \qw & \qw & \qw & \gate{Y} & \gate{Y(\frac{\pm \pi}{2})} & \gate{Y} & \qw
} \\ \\ \\
(b) &
    \Qcircuit @C=.7em @R=.7em {
      \push{\rule{0em}{1.3em}} & \qw & \qw & \ctrl{1} & \qw & \qw \\
      & \gate{Y(\frac{-\pi}{4})} & \gate{Y} & \control \qw & \gate{Y(\frac{\pi}{4})} & \qw
    }
    \raisebox{-1.3em}{\rule{.8em}{0em}$=$\rule{.5em}{0em}}
    \Qcircuit @C=.7em @R=.7em {
      & \qw & \ctrl{1} & \qw & \gate{Z}  & \qw \\
      & \gate{Y(\frac{-\pi}{4})} & \control \qw & \gate{Y(\frac{\pi}{4})} & \gate{Y} & \qw
    }
    \\ \\
\end{tabular}
 \caption{Circuit identities for propagating $Y$ 
   errors on gate states. (a) The rule
   for propagating a $Y$ error on a resource state used to apply a
   $Y(\frac{\pm\pi}{4})$ gate as in Fig.~\ref{fig:littleCircuits}(c). 
   The effect of the error is the same as
   applying the $Y$ error after the gate.  (b) A $Y$ error on the first
   resource state required to implement a $\protect\CH$ gate using the
   circuits in Fig.~\ref{fig:littleCircuits} propagates to a $Z$ error
   on the control and a $Y$ error on the target.  An error on the
   second resource state propagates trivially to a $Y$ error on the
   target.\label{fig:Yerror}}
\end{figure}
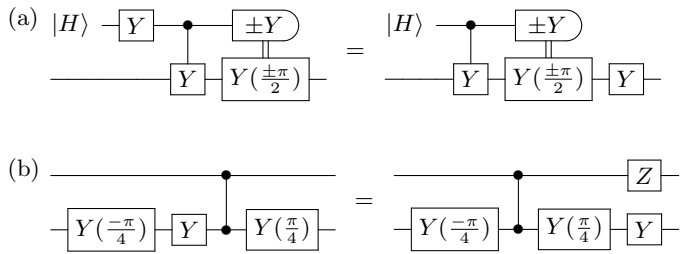

The final step of the distillation routine is to use Clifford
operations to decode the logical qubits and measure the syndrome of
the $\code{C}_4$ code, leaving two output resource states.  The
routine succeeds and accepts the output if neither the encoded
measurement nor the syndrome indicates an error. Otherwise the output
is discarded.  We analyze the error patterns for the full distillation
circuit, shown in Fig.~\ref{fig:errorAnalysis}(a), in the next section.

\section{Analysis\label{sec:analysis}}

Given perfect Clifford operations and twirled resource states, the
only possible errors in our distillation circuit are $Y$ errors on the
input resource states.  For simplicity we assume that the input states
to be distilled are independent and all have the same error
probability $p$.

As described in the previous section, the ten input resource states
can be partitioned into two resource states that are encoded into the
code $\code{C}_4$ (data states) and four pairs of resource states used
to implement $\CH$ gates (gate states).  The effect of one error
on either type of resource state can be understood as follows.

A $Y$ error on one of the data states becomes an encoded $Y$ error,
which flips the outcome of the encoded measurement (the measurement of $\bar{H}_1 \bar{H}_2$)
and is thus detected by the routine.  The decoding exactly
reverses the encoding, and the logical gates in between preserve
logical $Y$ errors, so errors on data states persist on the output and
are not detected by the syndrome measurement.

As shown in Fig.~\ref{fig:Yerror}, a $Y$ error on one of the gate
states causes the intended $\CH$ gate to act as $\CH$
followed by either $Z\otimes Y$ or $I\otimes Y$, depending on which
resource state was in error.  Using circuit identities, these errors
can be propagated to a common location just before the second set of
$\CH$ gates, as depicted in Fig.~\ref{fig:errorAnalysis}(b).  At this
location, such an error appears as a combination of some
logical operator and a $Y$ error on a single qubit, which is not an
encoded Pauli operator for the $\code{C}_4$ code.  This $Y$ error is
followed only by logical operators, which cannot take an error
subspace to a non-error subspace, and decoding, which returns a
syndrome indicating whether the state is in an error subspace.
Consequently, a single error on a gate state is detected by
the syndrome measurement.

\begin{figure*}
\begin{tabular}{ll}
(a) & 
\begin{pgfpicture}{0cm}{0cm}{0cm}{0cm}
\color{yellow}
\pgfrect[fill]{\pgfpoint{0cm}{-.25cm}}{\pgfpoint{.65cm}{-.6cm}}
\pgfrect[fill]{\pgfpoint{0cm}{-1.45cm}}{\pgfpoint{.65cm}{-.6cm}}
\pgfrect[fill]{\pgfpoint{2.57cm}{.2cm}}{\pgfpoint{1.4cm}{-2.9cm}}
\pgfrect[fill]{\pgfpoint{7.85cm}{.2cm}}{\pgfpoint{1.4cm}{-2.9cm}}
\end{pgfpicture}
\Qcircuit @C=.7em @R=.7em {
\push{\rule{1.3em}{0em}} & & & & & \lstick{\ket{+}} & \ctrl{2} & \ctrl{4} & \qw & \qw & \qw & \ctrl{2} & \qw & \qw & \ctrl{4} & \ctrl{2} & \ctrl{4} & \measureD{X} \\
& \lstick{\ket{H}} & \targ & \qw & \ctrl{1} & \qw & \qw & \qw & \qw & \qw & \qw & \qw & \qw & \qw & \qw & \qw & \qw & \ctrl{1} & \qw & \targ & \qw \\
& \lstick{\ket{0}} & \qw & \targ  & \targ & \qw & \gate{H} & \qw & \gate{H} & \gate{S} & \ctrl{2} & \ctrl{-2} & \ctrl{1} & \gate{H} & \qw & \gate{H} & \qw & \targ & \targ & \qw & \measureD{Z} \\
& \lstick{\ket{H}} & \qw & \ctrl{-1} & \targ & \qw & \qw & \qw & \gate{H} & \qw & \qw & \qw & \gate{Y} & \gate{Y} & \qw & \qw & \qw & \targ & \ctrl{-1} & \qw & \qw \\
& \lstick{\ket{+}} & \ctrl{-3} & \qw & \ctrl{-1} & \qw & \qw & \gate{H} & \qw & \gate{S^{\dagger}} & \ctrl{-2} & \qw & \qw & \ctrl{-1} & \ctrl{-4} & \qw & \gate{H} & \ctrl{-1}& \qw & \ctrl{-3} & \measureD{X} \\
}
\\
\\
(b) &
\begin{pgfpicture}{0cm}{0cm}{0cm}{0cm}
\color{yellow}
\pgfrect[fill]{\pgfpoint{2.468cm}{-.475cm}}{\pgfpoint{1.318cm}{-2.248cm}}
\pgfrect[fill]{\pgfpoint{10.32cm}{.148cm}}{\pgfpoint{1.117cm}{-2.862cm}}
\end{pgfpicture}
\Qcircuit @C=.7em @R=.7em {
\push{\rule{1.3em}{0em}} & & & & & & \lstick{\ket{+}} & \ctrl{1} & \ctrl{2} & \ctrl{3} & \ctrl{4} & \qw & \qw & \qw & \qw & \qw & \multigate{4}{\parbox{.8cm}{Pauli\\errors}} & \ctrl{1} & \ctrl{2} & \ctrl{3} & \ctrl{4} & \measureD{X} \\
& \lstick{\ket{H}} & \targ & \qw & \ctrl{1} & \multigate{3}{\parbox{1.0cm}{Logical\\$Y$\\errors}} & \qw & \gate{H} & \qw & \qw & \qw & \qw & \qw & \qw & \qw & \qw & \ghost{\parbox{.8cm}{Pauli\\errors}} & \gate{H} & \qw  & \qw & \qw & \ctrl{1} & \qw & \targ & \qw \\
& \lstick{\ket{0}} & \qw & \targ  & \targ & \ghost{\parbox{1.0cm}{Logical\\$Y$\\errors}} & \qw & \qw & \gate{H} & \qw & \qw  & \gate{H} & \gate{S} & \ctrl{2} & \ctrl{1} & \gate{H} & \ghost{\parbox{.8cm}{Pauli\\errors}} & \qw & \gate{H} & \qw & \qw & \targ & \targ & \qw & \measureD{Z} \\
& \lstick{\ket{H}} & \qw & \ctrl{-1} & \targ & \ghost{\parbox{1.0cm}{Logical\\$Y$\\errors}} & \qw & \qw & \qw & \gate{H} & \qw & \gate{H}& \qw & \qw & \gate{Y} & \gate{Y} & \ghost{\parbox{.8cm}{Pauli\\errors}} & \qw & \qw & \gate{H} & \qw & \targ & \ctrl{-1} & \qw & \qw \\
& \lstick{\ket{+}} & \ctrl{-3} & \qw & \ctrl{-1} & \ghost{\parbox{1.0cm}{Logical\\$Y$\\errors}} & \qw & \qw & \qw & \qw & \gate{H} & \qw & \gate{S^{\dagger}} & \ctrl{-2} & \qw & \ctrl{-1} & \ghost{\parbox{.8cm}{Pauli\\errors}} & \qw & \qw & \qw & \gate{H} & \ctrl{-1}& \qw & \ctrl{-3} & \measureD{X} \\}
\\
\\
(c) &
\begin{pgfpicture}{0cm}{0cm}{0cm}{0cm}
\color{red}
\pgfrect[fill]{\pgfpoint{.987cm}{-.476cm}}{\pgfpoint{1.117cm}{-.931cm}}
\pgfrect[fill]{\pgfpoint{5.028cm}{.15cm}}{\pgfpoint{1.111cm}{-1.551cm}}
\end{pgfpicture}
\Qcircuit @C=.7em @R=.7em {
\push{\rule{1.3em}{0em}} & & & \lstick{\ket{+}} & \ctrl{1} & \ctrl{2} & \ctrl{1} & \qw & \multigate{2}{\parbox{.8cm}{Pauli\\errors}} & \ctrl{1} & \ctrl{2} & \ctrl{1} & \measureD{X} \\
& \lstick{\ket{H}} & \multigate{1}{\parbox{.8cm}{$Y$\\errors}} & \qw & \gate{H} & \qw & \qswap \qwx[1] & \qw  & \ghost{\parbox{.8cm}{Pauli\\errors}} & \gate{H} & \qw & \qswap \qwx[1] & \qw & \qw \\
& \lstick{\ket{H}} & \ghost{\parbox{.8cm}{$Y$\\errors}} & \qw & \qw & \gate{H} & \qswap & \gate{H} & \ghost{\parbox{.8cm}{Pauli\\errors}} & \qw & \gate{H} & \qswap & \qw & \qw \\
}
\end{tabular}
\caption{Illustration of a method of classifying errors in our
  distillation routine.  The first circuit shows the full distillation
  routine with possible error locations shaded.  Using circuit
  identities, errors at any of these locations can be concentrated
  into one of two regions (and types) yielding an equivalent circuit
  of the form shown in (b).  Any Pauli errors on the lower four (code)
  qubits in the second error region of (b) will be detected by the
  decoding circuit unless they act as a logical (Pauli) operator on
  the code.  The remaining possible errors, those undetectable by the
  syndrome measurement, may then be enumerated and classified efficiently
  using the logical circuit shown in (c).\label{fig:errorAnalysis} }
\end{figure*}
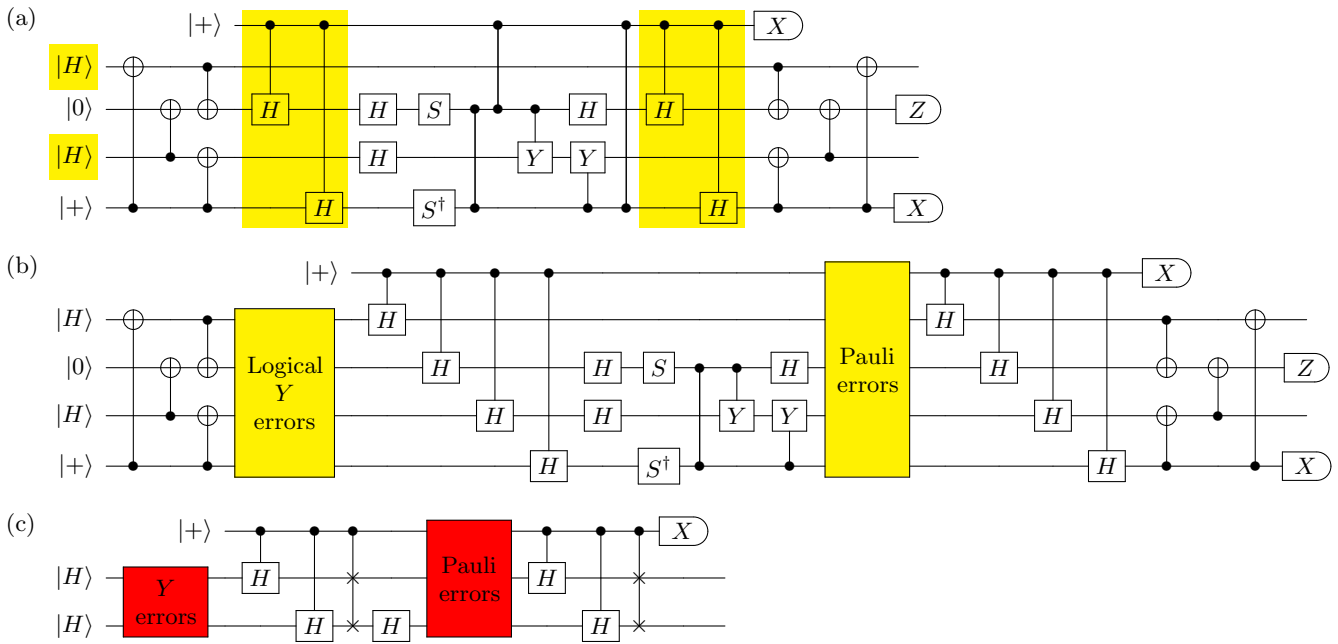

The effect of multiple errors is best understood by propagating the
errors from both gate and data states to two locations, as described
in Fig.~\ref{fig:errorAnalysis}.  The $Y$ Pauli operators from any pair of errors
on gate states (described above) combine to form a logical operator for the code, 
so any even number
of errors on such states will fail to be detected by the decoder, while
any odd number of errors will be detected.  For each error
pattern that is not detected by the syndrome, 
one can consider the effect of the logical errors on the
encoded information and encoded measurement.  For example, 
even numbers of $Z$ errors on the encoded-measurement ancilla 
will cancel and cause the 
distillation to be accepted.  Each pattern of errors on the resource states
can then be classified first by whether it is detected and then by
whether it causes a non-trivial logical error. Because errors on each
state are considered to be equiprobable and independent, this
enumeration determines the probability $a(p)$ of the distillation
routine accepting and the marginal error probability $e(p)$ of an
output state conditional on acceptance.  It happens that $e(p)$
does not depend on which of the two output states is considered.

Based on the observation that any single error results in rejection, a simple
estimate of the acceptance probablity is $a(p) = 1 - 10 p + O(p^2)$.  The exact
accounting yields
\begin{align*}
  a(p) = 1 &- 10 p + 58 p^2 - 192 p^3 + 400 p^4\\ &- 544 p^5 + 480 p^6
  - 256 p^7 + 64 p^8\;.
\end{align*}

The probability of an undetected error on the output states is the
probability that the routine accepts and that the output nevertheless
has an error. Because any single error is detected, this probability
has order $p^2$.  The marginal undetected-error probability of the
first (or identically the second) output state is
\begin{align*}
  u(p) = 9 p^2 &- 56 p^3 + 160 p^4 - 256 p^5\\
  &+ 240 p^6 - 128 p^7 + 32 p^8\;.
\end{align*}
For our purposes, this is the quantity of interest, but one can also
compute the probability of at least one undetected error on the two outputs.
This is given by
\begin{align*}
  u_2(p) = 13 p^2 &- 80 p^3 + 228 p^4 - 368 p^5\\
  &+ 352 p^6 - 192 p^7 + 48 p^8\;.  
\end{align*}

The quality of the distillation routine's output is quantified by the
marginal probability of error of an output state conditional on
acceptance:
\begin{align}
  e(p) = u(p)/a(p) \;.\label{eq:po}
\end{align}
The corresponding probability of at least one error on
the two outputs conditioned on acceptance is $e_2(p) = u_2(p)/a(p)$.
It can be shown numerically that $e_2(p)\leq 2 e(p)- e(p)^2$, so
errors on the two output states are positively correlated.  In fact,
the probability of an error on both output states is of order
$p^2$.

\section{Distillation Sequences\label{sec:distillationSequences}}

The ultimate goal of magic-state distillation is to produce resource
states of sufficiently high quality that they can be used to
implement all non-Clifford gates in a computation without
significantly increasing the probability that the computation will
fail.  A generic computation will fail if any single gate fails, so
the probability of one or more errors on the $R$ resource states
employed in a computation must be much less than $1$ to ensure that
the computation succeeds with high probability.  By the union bound,
it is sufficient that the marginal probability of error on each
resource state be much less than $1/R$.  Strong correlations can
reduce this requirement on marginal probabilities, but for independent
errors the bound is necessary.  Consequently, the proximate goal of
magic-state distillation is to produce resource states such that the
marginal probability of error for any single state is bounded from
above by some goal error probability, $e_g\ll 1/R$.  In algorithms
currently envisoned for quantum computers, $R$ can easily be
$10^{10}$ or more.

In order to obtain resource states with very low probabilities of
error, it is necessary to use multiple rounds of distillation, where
the input to each round is produced by the preceding one.  We consider
a sequence of such rounds where each is based on a single but possibly
round-dependent distillation routine.  In a round based on an
$m$-to-$n$ distillation routine, the output resource states from the
preceding round are grouped into blocks of size $m$, and each block is
then distilled to $n$ states, which may, in general, have correlated errors.

The sequence of rounds is chosen to minimize the number of input
resource states needed to produce a given number of output states with
marginal probability of error $e_g$ or less.  In practice, we are
interested in the case where the number of resource states to be
prepared is very large, allowing us to consider only the asymptotic
cost. The cost is defined as the number of input resource states used
per output resource state produced.  For one round of the $10$-to-$2$
distillation routine, the cost is $\frac{10}{2 a(p)}$, with marginal
probability of error $e(p)$ on the output states conditioned on
acceptance.

The one-round expression for marginal probability of error given in
Sec.~\ref{sec:analysis} assumes that the input resource states suffer
from errors independently and with equal probability.  Generally,
however, the output of a distillation routine need not satisfy either
restriction, which poses a concern for distillation sequences
involving multiple rounds.  If necessary, distillation routines can be
output symmetrized by randomly permuting the output states, thereby
ensuring that the output states from a given round all have the same
error probability.  Independence is a concern whenever a routine that
outputs more than one state per instance is used, since errors on the
states output by one instance of such a routine are usually not
independent.  For example, the probability of two errors in the output
of the $10$-to-$2$ distillation routine is of the same order as that
for one error.  Performing a distillation routine using such correlated states
as input can substantially increase the output error probability.  To
avoid this effect, it is sufficient to ensure that no instance of a
routine depends on more than one output from any previously executed
instance of a routine.  The following lemma and its corollary show
that this strategy works without an increase in asymptotic cost. As a
consequence we can calculate the asymptotic cost as if the output
states of all routines were completely independent.

\begin{lemma}
\label{thm:one_step_rate}
Let $\cD$ be an $m$-to-$n$ output-symmetrized distillation routine
with acceptance probability $a(p)$ and conditional output error
probability $e(p)$ for each output state. Given a block of $K$
independent resource states, each with error probability $p$, one can
produce $n$ blocks of output states where each block's states are
independent within the block and have probability of error $e(p)$.
Each block contains $a(p) \lfloor K/m\rfloor$ states on average.
\end{lemma}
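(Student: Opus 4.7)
The plan is to prove the lemma by an explicit construction that is almost purely bookkeeping: partition the $K$ inputs into disjoint size-$m$ groups, run an independent instance of $\cD$ on each group, then deal the $n$ outputs of each accepted instance across $n$ distinct blocks. The content of the lemma lies in verifying that this scheme yields (i) independence within each output block and (ii) the claimed marginal error probability $e(p)$ at every position, using only the hypotheses that inputs are independent and that $\cD$ is output-symmetrized.

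First, I would set up notation: split the $K$ input states into $\lfloor K/m\rfloor$ groups $G_1,\ldots,G_{\lfloor K/m\rfloor}$ of $m$ states each, discarding any leftover states. Let $\cD$ be instantiated independently on each $G_j$, meaning the internal Clifford randomness and the output-symmetrizing permutation for instance $j$ are drawn independently across $j$. Each instance produces an accept/reject flag $A_j$ and, if accepted, an $n$-tuple of output states $(\sigma_j^{(1)},\ldots,\sigma_j^{(n)})$. Because the inputs to different groups are independent and the internal randomness of different instances is independent, the random variables $(A_j,\sigma_j^{(1)},\ldots,\sigma_j^{(n)})$ are mutually independent across $j$.

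Next, I would form the $n$ output blocks by placing $\sigma_j^{(i)}$ into block $i$ whenever instance $j$ accepts. Independence within a block is then immediate: the states in block $i$ come from distinct values of $j$, and different $j$'s are independent by the previous step. For the marginal error probability at each position, I would use the hypothesis that $\cD$ is output-symmetrized, meaning it applies a uniformly random permutation to its $n$ outputs before returning them. This makes the tuple $(\sigma_j^{(1)},\ldots,\sigma_j^{(n)})$ exchangeable, so each $\sigma_j^{(i)}$ has the same marginal distribution, which by assumption is a distribution whose probability of error is $e(p)$. Hence every state in every block has error probability $e(p)$.

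Finally, for the block size: the number of states in each block is $\sum_{j=1}^{\lfloor K/m\rfloor} \mathbf{1}[A_j{=}\text{accept}]$, a binomial random variable with parameters $\lfloor K/m\rfloor$ and $a(p)$, whose mean is $a(p)\lfloor K/m\rfloor$. I do not expect any real obstacle; the only delicate point is being explicit that output symmetrization equalizes the marginals at each of the $n$ positions without disturbing the independence across instances, so that ``$e(p)$ for each output state'' holds position-by-position inside every block and not merely on average.
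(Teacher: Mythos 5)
Your proposal is correct and follows essentially the same construction as the paper's proof: partition into $\lfloor K/m\rfloor$ groups, run $\cD$ independently on each, deal the $j^{\textrm{th}}$ outputs of the accepting instances into block $j$, and use independence across instances plus output symmetrization to get within-block independence and a uniform marginal error probability $e(p)$. Your explicit remarks on the binomial block size and on exchangeability of the output tuple are just slightly more detailed versions of what the paper states.
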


\begin{proof}
Partition the $K$ resource states into $\lfloor K/m\rfloor$ sets of
$m$ states, discarding any remaining ones.  Apply $\cD$ to each
set of $m$ states, getting $n$ states with probability 
$a(p)$ in each case. Conditional on acceptance, each output state has
marginal probability of error $e(p)$, though these errors are not
independent. Form $n$ blocks by taking the $j^{\textrm{th}}$ output
state from each successful distillation, for $j=1,\ldots, n$. These
blocks have the desired properties for a given pattern of distillation
successes. Because the error probabilities of the $j^{\textrm{th}}$
output states of the successful distillations do not depend on the
pattern of successes, any such arrangement of the output states that
depends only on the pattern of successes preserves this independence.
\end{proof}

\begin{figure}
\includegraphics[width=8.5cm]{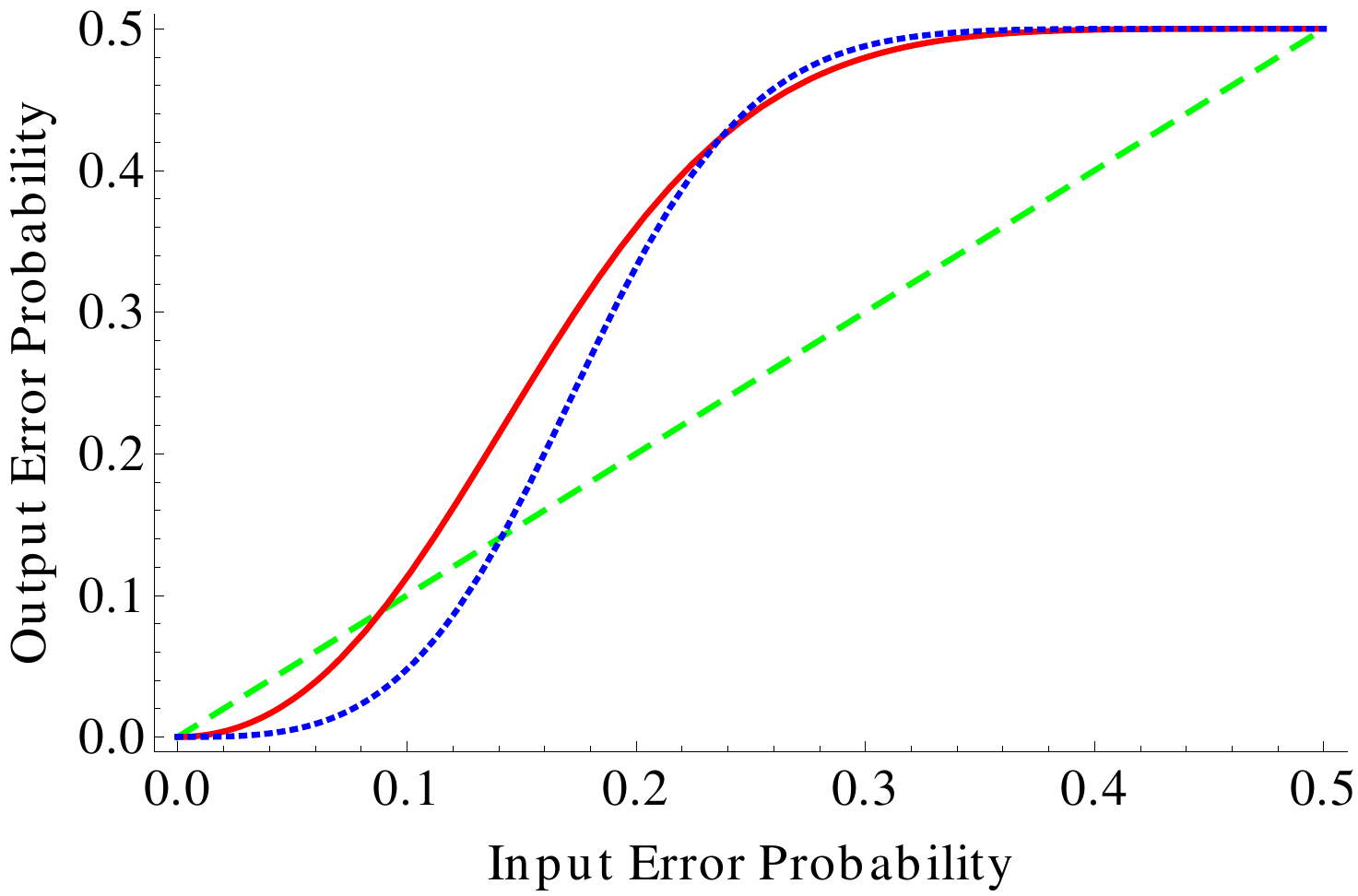}
\caption{Plots of the marginal error probabilities conditional on
  acceptance for the $10$-to-$2$ (solid) and $15$-to-$1$ (dotted)
  routines.  The dashed line indicates the output if no distillation
  is performed. The thresholds for the two routines are determined by
  the first intersections with this line.
\label{fig:both_thresh}}
\end{figure}

\begin{corollary}
\label{cor:one_step_rate}
If, in Lem.~\ref{thm:one_step_rate}, $K$ is random with average
$\langle K \rangle$, then the expected total number of output states is at
least $a(p) n \left (\frac{\langle K \rangle}{m}-1 \right )$. The average size of each of
the $n$ output blocks of independent states is at least
$a(p) \left (\frac{\langle K \rangle}{m}-1 \right )$.
\end{corollary}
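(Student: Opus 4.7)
The plan is to obtain the corollary as a direct consequence of Lem.~\ref{thm:one_step_rate} by conditioning on the value of $K$ and then taking an expectation over $K$. Specifically, for each fixed value $k$ of the random variable $K$, Lem.~\ref{thm:one_step_rate} guarantees that applying $\cD$ as prescribed produces $n$ blocks of independent output states, with each block containing on average (over the randomness of the $\lfloor k/m \rfloor$ independent distillation attempts) $a(p)\lfloor k/m \rfloor$ states. The independence-within-a-block property does not interact with the randomness of $K$, since it was established for every pattern of successes at every fixed $k$.

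Next, I would use the tower property of expectation. The average size of the $j^{\textrm{th}}$ output block is $E_K[a(p)\lfloor K/m\rfloor] = a(p)\,E[\lfloor K/m\rfloor]$. The elementary bound $\lfloor x \rfloor \geq x - 1$ applied pointwise yields
\begin{equation*}
E[\lfloor K/m \rfloor] \;\geq\; E[K/m] - 1 \;=\; \frac{\langle K\rangle}{m} - 1,
\end{equation*}
so each of the $n$ output blocks has average size at least $a(p)\bigl(\langle K\rangle/m - 1\bigr)$, which is the second claim.

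For the first claim, I would sum the expected block sizes. Since the total number of output states is exactly the sum of the sizes of the $n$ blocks, linearity of expectation gives
\begin{equation*}
E[\text{total outputs}] \;=\; n \cdot a(p)\, E[\lfloor K/m\rfloor] \;\geq\; a(p)\, n\left(\frac{\langle K\rangle}{m}-1\right),
\end{equation*}
as desired.

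There is no real obstacle here; the proof is essentially a bookkeeping step that lifts the deterministic-$K$ statement of Lem.~\ref{thm:one_step_rate} to a random $K$. The only mild subtlety is ensuring that the independence within each block is preserved when averaging over $K$; this follows because the construction of the blocks (take the $j^{\textrm{th}}$ successful output from each distillation) depends only on the pattern of successes, not on $K$ itself, so the argument in the lemma applies conditionally on $K=k$ for every $k$.
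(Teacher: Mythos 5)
Your proof is correct and is essentially the argument the paper intends (the paper states the corollary without an explicit proof, treating it as an immediate consequence of the lemma): condition on $K=k$, apply the lemma, use $\lfloor K/m\rfloor \geq K/m - 1$ together with linearity of expectation, and note that independence within each block holds conditionally on every value of $K$. Nothing further is needed.
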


A multi-round distillation routine can now be formulated as follows:
Assume that after round $l-1$ there are $N_{l-1}$ blocks of resource states,
where within each block the states are independent with identitical
error probabilities $p_{l-1}$, and the number of states in each block is
$\langle K_{l-1} \rangle$ on average. Applying the procedure of
Lem.~\ref{thm:one_step_rate} to each block with a $m_{l}\rightarrow n_{l}$
distillation routine $\cD_{l}$ yields $N_{l}=N_{l-1}n_{l}$ output blocks,
where each output block has $\langle K_{l} \rangle \geq a_{l}(p_{l-1})
\left (\frac{\langle K_{l-1} \rangle}{m_{l}}-1 \right )$ resource states on average, 
independent within a block
and each with error probability $p_{l}=e_{l}(p_{l-1})$. The first round
starts with $K_0$ independent resource states, each of which suffer an
error with probability $p_0$. For large $K_0$, the constant offsets of
$-1$ in the expressions are negligible.  Consequently, the asymptotic
cost $c_{l}$ of resource-state production after round $l$ satisfies $c_{l} = \frac{m_{l}}{n_{l} a_{l}(p_{l-1})}c_{l-1}$, where $c_0=1$.  The error
probability after round $l$ satisfies $p_{l} = e_{l}(p_{l-1})$.

\section{Comparative Performance \label{sec:comparativePerformance}}

At present, practical fault-tolerant architectures require
physical-gate error probabilities well below $.01$, which suggests
that it should be possible to directly prepare resource states with
error probabilities of a few percent or less.  Given such states, the
most practical routine for $\ket{H}$ distillation developed to date is
the $15$-to-$1$ routine.  In this section, we compare the performance
of the $10$-to-$2$ routine to that of the $15$-to-$1$ routine and
consider the effect of using them in concert.

\begin{figure*}
\includegraphics[width=15cm]{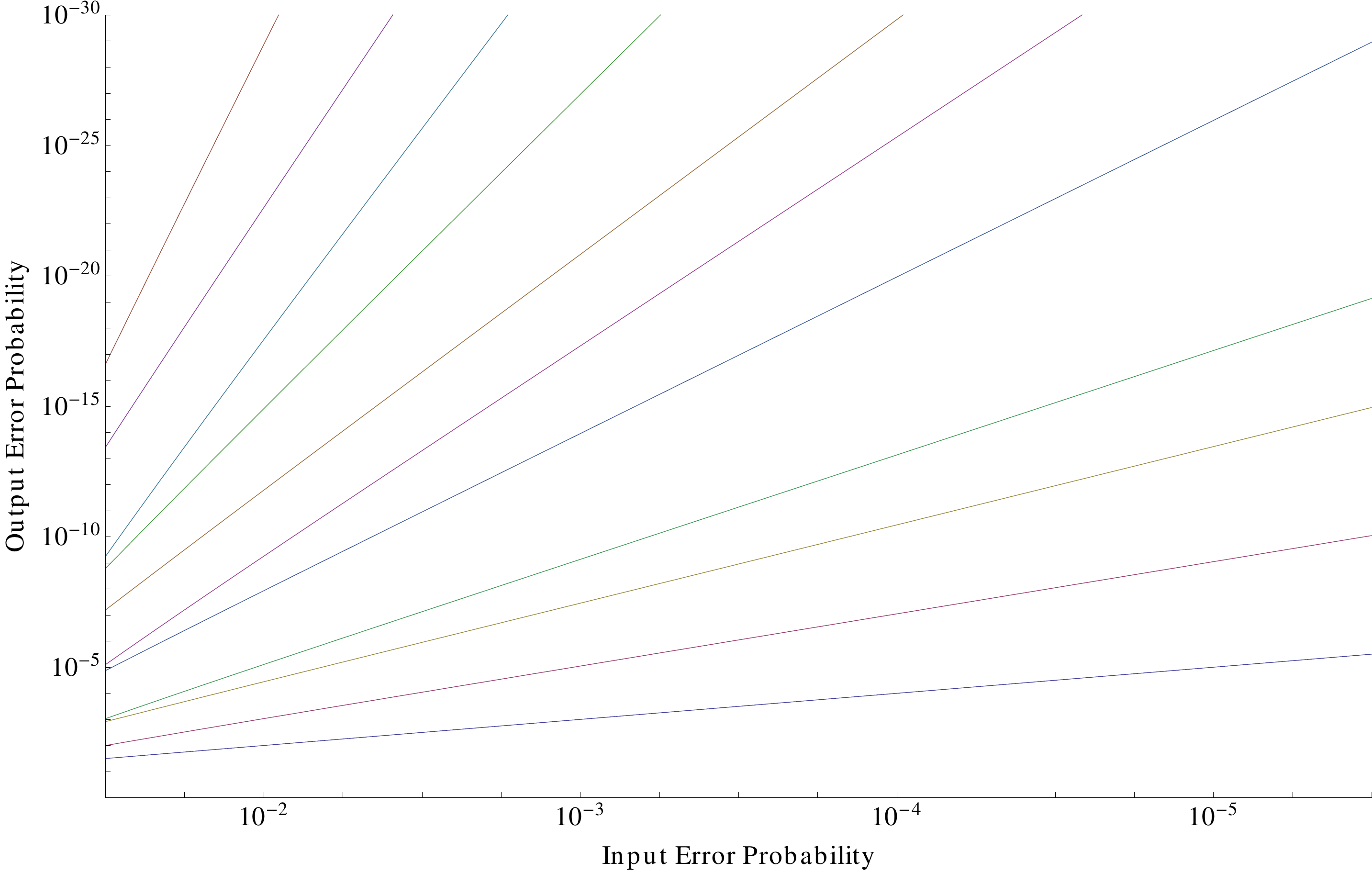}
\begin{pgfpicture}{0cm}{0cm}{0cm}{0cm}
\pgfputat{\pgfxy(-1.1,2)}{\pgfbox[center,center]{No distillation}}
\pgfputat{\pgfxy(-.4,3.4)}{\pgfbox[center,center]{A}}
\pgfputat{\pgfxy(-.4,4.8)}{\pgfbox[center,center]{B}}
\pgfputat{\pgfxy(-.4,5.9)}{\pgfbox[center,center]{AA}}
\pgfputat{\pgfxy(-.4,8.5)}{\pgfbox[center,center]{BA}}
\pgfputat{\pgfxy(-2.8,9.25)}{\pgfbox[center,center]{AAA}}
\pgfputat{\pgfxy(-4.85,9.25)}{\pgfbox[center,center]{BB}}
\pgfputat{\pgfxy(-7.4,9.25)}{\pgfbox[center,center]{BAA}}
\pgfputat{\pgfxy(-9.0,9.25)}{\pgfbox[center,center]{AAAA}}
\pgfputat{\pgfxy(-10.35,9.25)}{\pgfbox[center,center]{BBA}}
\pgfputat{\pgfxy(-11.5,9.25)}{\pgfbox[center,center]{BAAA}}
\end{pgfpicture}
\caption{Output error probability as a function of input error
  probability for various sequences of the $10$-to-$2$ (A) and
  $15$-to-$1$ (B) $\ket{H}$ distillation routines.  Each curve is
  labeled by the associated sequence of routines, e.g., $BAA$ denotes
  the $15$-to-$1$ routine followed by two rounds of the $10$-to-$2$
  routine.  The region directly underneath each labeled curve is the
  region in which the labeled strategy is preferred.
  \label{fig:regionplot}}
\end{figure*}

Routines for magic-state distillation are typically judged on the
basis of their threshold, that is, the error probability below which
resource states can be successfully distilled.  At the threshold, a
distillation routine outputs resource states no better than the
inputs.  Thus, the threshold $p_t$ for the $10$-to-$2$ routine can
be determined from Eq.~\eqref{eq:po} by considering solutions to
$p_t=e(p_t)$.  This yields a threshold of $p_t = 0.089$, which is
substantially below the threshold of $0.141$ for the $15$-to-$1$
routine~\cite{BravyiKitaev2005}, but either threshold should be
adequate for the error regime of interest.  The curves for the
marginal output error probability of the $10$-to-$2$ and $15$-to-$1$
routines are plotted in Fig.~\ref{fig:both_thresh}.

The efficiency of a distillation routine can be
characterized, as detailed in Ref.~\cite{BravyiKitaev2005},
by the output error probability as a function of the number 
of resource states employed.  In the limit of small initial error 
probability $p$, the output error
probability for the $10$-to-$2$ routine after $l$ rounds of 
distillation is $\frac{1}{9}(9 p)^{2^l}$.  
In the limit of both small $p$ and many output states, $l$ rounds of
distillation require $k=5^l$ input resource states per output.  Consequently,
taking $l$ to be continuous, the asymptotic output error probability as a 
function of the number of input resource states expended is 
$\frac{1}{9}(9 p)^{k^\xi},$ where
$\xi = \frac{1}{\log_2(5)} \approx .43$. 
The corresponding exponent for the $15$-to-$1$ routine is $.4$, 
so the $10$-to-$2$ routine performs slightly better for this metric. 
However, these smooth functions hide the step discontinuities induced by 
using sequences of increasing integral lengths (as seen in 
Fig.~\ref{fig:distplot}) and can be misleading for practical comparisons.

Of greater utility to us is the cost, in resource states consumed per
output state, required to obtain resource states of sufficiently high
quality for useful quantum computations, given resource states with
error probabilities in the range of $0.01$ to $10^{-5}$.  The cost
depends on the distillation sequence, which generally entails multiple
rounds of distillation.  For the purpose of optimizing the
distillation sequence, one can consider arbitrary routines at each
round.  Here, we consider sequences involving the $10$-to-$2$ and
$15$-to-$1$ distillation routines.

In Fig.~\ref{fig:regionplot} we plot the output error probability as a
function of input error probability for various sequences.  Data for
the $15$-to-$1$ routine was computed using the expressions
corresponding to $a(p)$ and $e(p)$ in Eq.~(35) and Eq.~(36) of
Ref.~\cite{BravyiKitaev2005}.  In the region plotted, distillation
sequences with higher output error (lower curves) also require fewer
input resource states per output state. Consequently, for a given
output error goal, $e_g$, and input error probability $p$, the label
of the nearest curve above the point $(p,e_g)$ in the plot gives the
best distillation sequence involving the $10$-to-$2$ and/or
$15$-to-$1$ routines.

\begin{figure*}
\begin{center}
\begin{picture}(400,220)(0,10)
\put(0,130){
\makebox(0,0)[l]{\includegraphics[width=13cm]{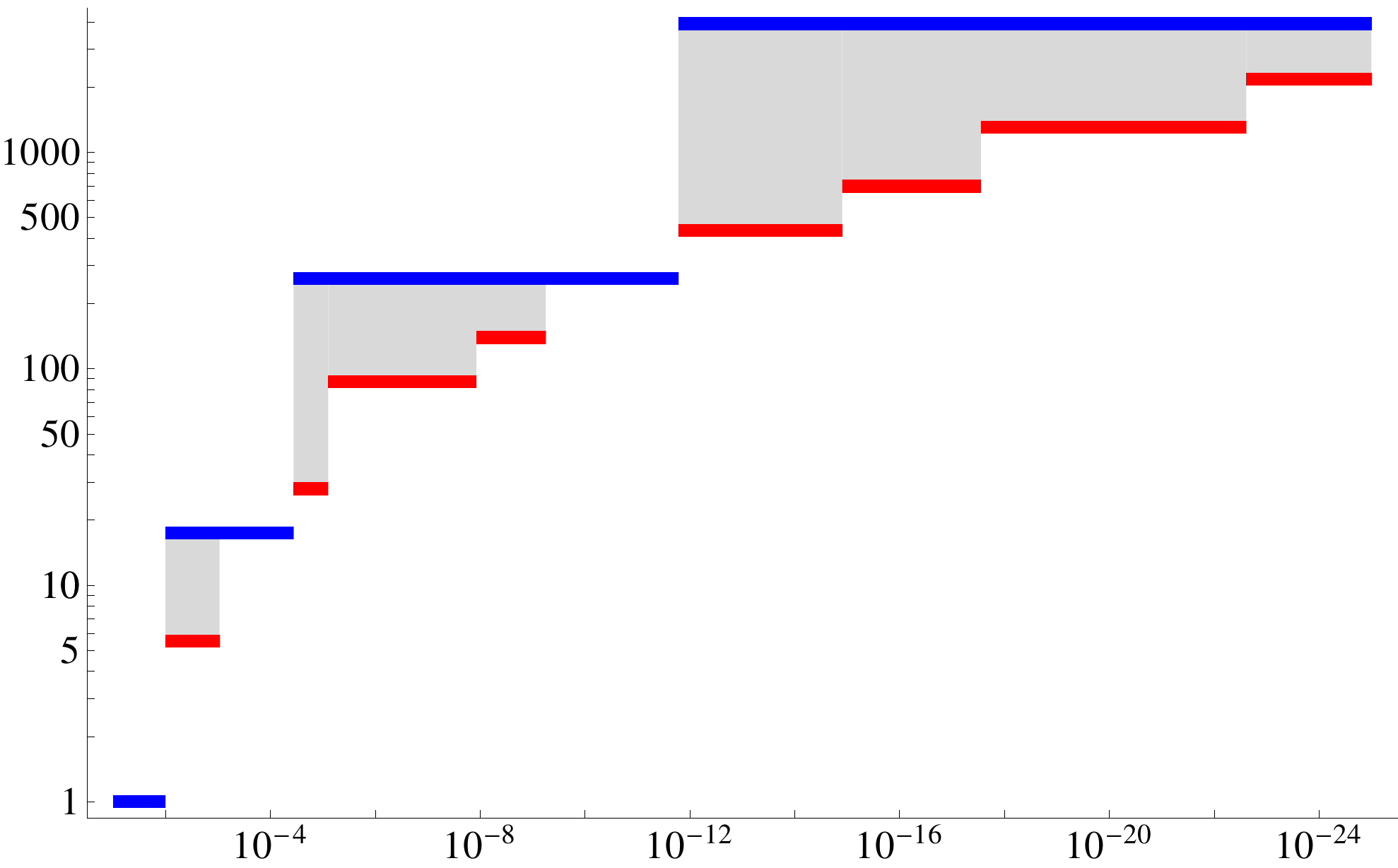}}}
\put(-10,110){\begin{sideways}Production Cost\end{sideways}}
\put(160,0){Goal Error Probability}
\end{picture}
\end{center}
\caption{Log-log plot of the cost required to produce output states
  satisfying a goal error probability ($e_g$) given input states with
  error probability $0.01$. The upper horizontal segments show the
  cost for the best sequence involving only the $15$-to-$1$ routine.
  The lower horizontal segments show the cost for the best sequence of
  routines that achieves $e_g$ or better. The gray regions indicate
  the improvement obtained using the
  $10$-to-$2$ routine.
  \label{fig:distplot}}
\end{figure*}

Table~\ref{tab:smallres} shows the costs and improvements for a number
of distillation sequences given an initial error probability of
$p=0.01$.  Not surprisingly, the table shows that the $10$-to-$2$
routine has a smaller cost, but the $15$-to-$1$ routine has greater
improvement in error probability per round.  For distillations that
use both routines, we find numerically that if $15$-to-$1$ rounds are
used they should be placed first.  This is intuitively consistent with 
the higher threshold for the $15$-to-$1$ routine, which suggests
better performance at high error probabilities. 

The cost improvements shown in Tab.~\ref{tab:smallres} are illustrated
more visually in Fig.~\ref{fig:distplot}, which shows the production
cost, at a fixed input error probability $p=0.01$ and as a function of
$e_g$, of the best distillation sequence compared to the best sequence
using only the $15$-to-$1$ routine.  For example, a goal error
probability near $10^{-5}$ can be achieved by using either two rounds
of the $15$-to-$1$ routine at a production cost of $261.7$ or two
rounds of the $10$-to-$2$ routine at a cost of $27.9$.  In this case,
the improvement in production cost obtained by incorporating the
$10$-to-$2$ routine is a factor of $9.4$.

\begin{table}
\begin{ruledtabular}
\begin{tabular}{cccc}
Distillation & Cost & Output error & Cost improvement \\
scheme & & probability, $e(p)$ & factor \\
\hline
$A$ & $5.5$ & $9\times10^{-4}$ & $3.2$ \\
$B$ & $17.4$ & $4\times10^{-5}$ & $1$ \\
$AA$ & $27.9$ & $7\times10^{-6}$ & $9.4$ \\
$BA$ & $87.2$ & $1\times10^{-8}$ & $3.0$ \\
$AAA$ & $139.3$ & $5\times10^{-10}$ & $1.9$ \\
$BB$ & $261.7$ & $2\times10^{-12}$ & $1$ \\
$BAA$ & $436.2$ & $1\times10^{-15}$ & $9.0$ \\
$AAAA$ & $696.6$ & $2\times10^{-18}$ & $5.6$ \\
$BBA$ & $1308.7$ & $2\times10^{-23}$ & $3.0$ \\
$BAAA$ & $2180.8$ & $1\times10^{-29}$ & $1.8$\\
\end{tabular}
\end{ruledtabular}
\caption{Costs and output error probabilities at $p=0.01$.  The labels
  for the distillation schemes follow the convention given in
  Fig.~\ref{fig:regionplot}.  The cost improvement factor is with respect
  to the shortest sequence using only the $15$-to-$1$ routine that
  achieves at least as good an output error probability.  
  \label{tab:smallres}}
\end{table}

\section{Conclusions \label{sec:conclusions}}

Magic-state distillation enables universal quantum computing given
only mediocre copies of a non-stabilizer state and high-quality
Clifford operations.  Considering the importance of Clifford-based
techniques to the theory of fault tolerance, we expect that
magic-state distillation will prove valuable for the practical
implementation of quantum computers.

At the logical level, computationally useful quantum algorithms
involve many non-Clifford gates, generally enough to account for a
significant fraction of all gates employed.  At least one high-quality
magic state is required for the indirect implementation of each
non-Clifford gate, so it is important to minimize the resources needed
for the distillation of such states.

In this work, we contributed to the goal of resource reduction by
introducing an $\ket{H}$ distillation routine that reduces the error
probability for faulty $\ket{H}$ states from $p$ to $O(p^2)$ and
produces $2$ output states using $10$ input states.  By judiciously
combining this routine with the higher-order ($p$ to $O(p^3)$) but higher-cost
$15$-to-$1$ routine from Refs.~\cite{BravyiKitaev2005,Knill2004}, we
showed that the number of faulty $\ket{H}$ states required to distill
states of a given quality can be reduced by up to an order of
magnitude.  Inclusion of additional distillation routines in the analysis 
would likely lead to further improvements.

\vspace*{.2in}

\begin{acknowledgments}
We thank Scott Glancy for his help in bringing this work to fruition.
This paper is a contribution of the National Institute of Standards
and Technology and not subject to U.S. copyright.
\end{acknowledgments}

\bibliography{DistBib}

\appendix*

\section{}

\begin{figure*}[ht]
\begin{tabular}{ll}
a) &
\Qcircuit @C=.7em @R=.7em @!R {
& \ctrl{1} & \qw &\targ & \qw & \targ & \qw & \ctrl{1} & \qw \\
& \targ & \targ & \qw & \qw & \qw & \targ & \targ & \qw \\
& \targ & \ctrl{-1} & \qw & \gate{H} & \qw & \ctrl{-1} & \targ & \qw \\
& \ctrl{-1} & \qw & \ctrl{-3} & \qw & \ctrl{-3} & \qw & \ctrl{-1}& \qw & \push{\rule{0em}{1.46em}} 
}
\raisebox{-3.6em}{\rule{.8em}{0em}$=$\rule{1.1em}{0em}} 
\Qcircuit @C=.7em @R=.7em @!R {
& \qw & \qw & \qw & \qw & \qw & \qw \\
& \qw & \targ & \qw & \targ & \qw & \qw \\
& \targ & \ctrl{-1} & \gate{H} & \ctrl{-1} & \targ & \qw \\
& \ctrl{-1} & \qw & \qw & \qw & \ctrl{-1}& \qw & \push{\rule{0em}{1.46em}} 
}
\raisebox{-3.6em}{\rule{.8em}{0em}$=$\rule{1.1em}{0em}} 
\Qcircuit @C=.7em @R=.7em @!R {
& \qw & \qw & \qw & \qw & \qw & \qw \\
& \gate{H} & \qw & \ctrl{1} & \ctrl{1} & \gate{H} & \qw \\
& \gate{H} & \ctrl{1} & \targ & \ctrl{-1} & \targ & \qw \\
& \qw & \ctrl{-1} & \qw & \qw & \ctrl{-1}& \qw & \push{\rule{0em}{1.46em}} 
}
\raisebox{-3.6em}{\rule{.8em}{0em}$=$\rule{1.1em}{0em}} 
\Qcircuit @C=.7em @R=.7em @!R {
& \qw & \qw & \qw & \qw & \qw & \qw & \qw \\
& \gate{H} & \gate{S} & \ctrl{2} & \ctrl{1} & \qw & \gate{H} & \qw \\
& \gate{H}& \qw & \qw & \gate{Y} & \gate{Y} & \qw & \qw  \\
& \qw & \gate{S^{\dagger}} & \ctrl{-2} & \qw & \ctrl{-1}& \qw & \qw & \push{\rule{0em}{1.46em}} 
}
\\ \\
b) &
\hspace{-1.68em}
\Qcircuit @C=.7em @R=.7em @!R {
\push{\rule{2.7em}{0em}}& \lstick{\ket{+}} & \ctrl{3} & \qw & \qw & \qw & \qw & \qw & \qw & \ctrl{3} & \measureD{X} \\
& \qw & \qw & \qw & \qw & \qw & \qw & \qw & \qw & \qw  & \qw & \qw \\
& \qw & \qw & \gate{H} & \gate{S} & \ctrl{2} & \ctrl{1} & \qw & \gate{H} & \qw & \qw & \qw \\
& \qw & \gate{H} & \gate{H}& \qw & \qw & \gate{Y} & \gate{Y} & \qw & \gate{H} &  \qw & \qw \\
& \qw & \qw & \qw & \gate{S^{\dagger}} & \ctrl{-2} & \qw & \ctrl{-1}& \qw & \qw & \qw & \qw & \push{\rule{0em}{1.46em}}
}
\raisebox{-4.6em}{\rule{1.6em}{0em}$=$\rule{1.9em}{0em}} 
\Qcircuit @C=.7em @R=.7em @!R {
& & \lstick{\ket{+}} & \qw & \ctrl{3} & \qw & \qw & \ctrl{3} & \qw & \measureD{X} \\
& \qw & \qw & \qw & \qw & \qw & \qw & \qw & \qw  & \qw & \qw \\
& \gate{H} & \gate{S} & \ctrl{2} & \qw & \ctrl{1} & \qw & \qw & \gate{H} & \qw & \qw \\
& \gate{H}& \qw & \qw & \gate{H} & \gate{Y} & \gate{Y} & \gate{H} & \qw & \qw  & \qw \\
& \qw & \gate{S^{\dagger}} & \ctrl{-2} & \qw & \qw & \ctrl{-1} & \qw & \qw & \qw & \qw & \push{\rule{0em}{1.46em}}
}
\\ \\
& 
\raisebox{-4.6em}{\rule{8em}{0em}$=$\rule{1.9em}{0em}} 
\Qcircuit @C=.7em @R=.7em @!R {
& & \lstick{\ket{+}} &\qw & \ctrl{3} & \qw & \ctrl{3} & \qw & \ctrl{3} & \qw & \measureD{X} & \\
& \qw & \qw & \qw & \qw & \qw & \qw & \qw & \qw & \qw  & \qw & \qw \\
& \gate{H} & \gate{S} & \ctrl{2} & \qw & \ctrl{1} & \qw & \qw & \qw & \gate{H} & \qw  & \qw \\
& \gate{H}& \qw & \qw & \control \qw & \gate{Y} & \gate{H} & \gate{Y} & \gate{H} & \qw  & \qw  & \qw \\
& \qw & \gate{S^{\dagger}} & \ctrl{-2} & \qw & \qw & \qw & \ctrl{-1} & \qw & \qw & \qw & \qw  & \push{\rule{0em}{1.46em}}
}
\raisebox{-4.6em}{\rule{1.6em}{0em}$=$\rule{1.9em}{0em}} 
\Qcircuit @C=.7em @R=.7em @!R {
& & \lstick{\ket{+}} & \qw & \ctrl{2} & \qw & \qw & \ctrl{4} & \qw & \measureD{X} \\
& \qw & \qw & \qw & \qw & \qw & \qw & \qw & \qw & \qw & \qw \\
& \gate{H} & \gate{S} & \ctrl{2} \qw & \ctrl{-2} & \ctrl{1} & \qw & \qw & \gate{H} & \qw & \qw \\
& \gate{H}& \qw & \qw & \qw & \gate{Y} & \gate{Y} & \qw & \qw  & \qw & \qw \\
& \qw & \gate{S^{\dagger}} & \ctrl{-2} & \qw & \qw & \ctrl{-1} & \ctrl{-4} & \qw & \qw & \qw & \push{\rule{0em}{1.46em}}
}
\end{tabular}
\caption{Additional details regarding the relations between circuits
  in Fig.~\ref{fig:logicalMeas}.  (a) A sequence of circuit identities
  deriving the form of the encoded Hadamard gate used in
  Fig.~\ref{fig:logicalMeas}(b). The starting circuit implements $H$ on
  the second logical qubit of the code $\code{C}_4$ by decoding the
  logical qubits into the first and third physical qubits, applying
  $H$ to the third qubit, and re-encoding. The first equivalence is
  obtained by commuting and cancelling pairs of $\protect\CX$ gates.
  The second equivalence uses the decomposition of the $\protect\CX$
  gate into $\protect\CZ$ and $H$ gates several times as well as the
  identity $H^2=I$.  The final equivalence uses two facts: $ZX=i Y$ and the order
  of a $\protect\CX$ and a $\protect\CZ$ gate with the same target can
  be exchanged if a $\protect\CZ$ gate is added between the controls.  
 (b) A sequence of circuit identities showing why the
  pair of $\protect\CH$ gates targeting the fourth qubit in
  Fig.~\ref{fig:logicalMeas}(b) can be eliminated.  Other than reorganization
  of commuting gates, these equivalences rely on the fact that, because $H$ anticommutes with $Y$,
  $\protect \CH$ and $\protect \CY$ gates with the same target can be
  exchanged if a $\protect\CZ$ gate between the two controls is added.
  \label{fig:CircuitDetails}}
\end{figure*}
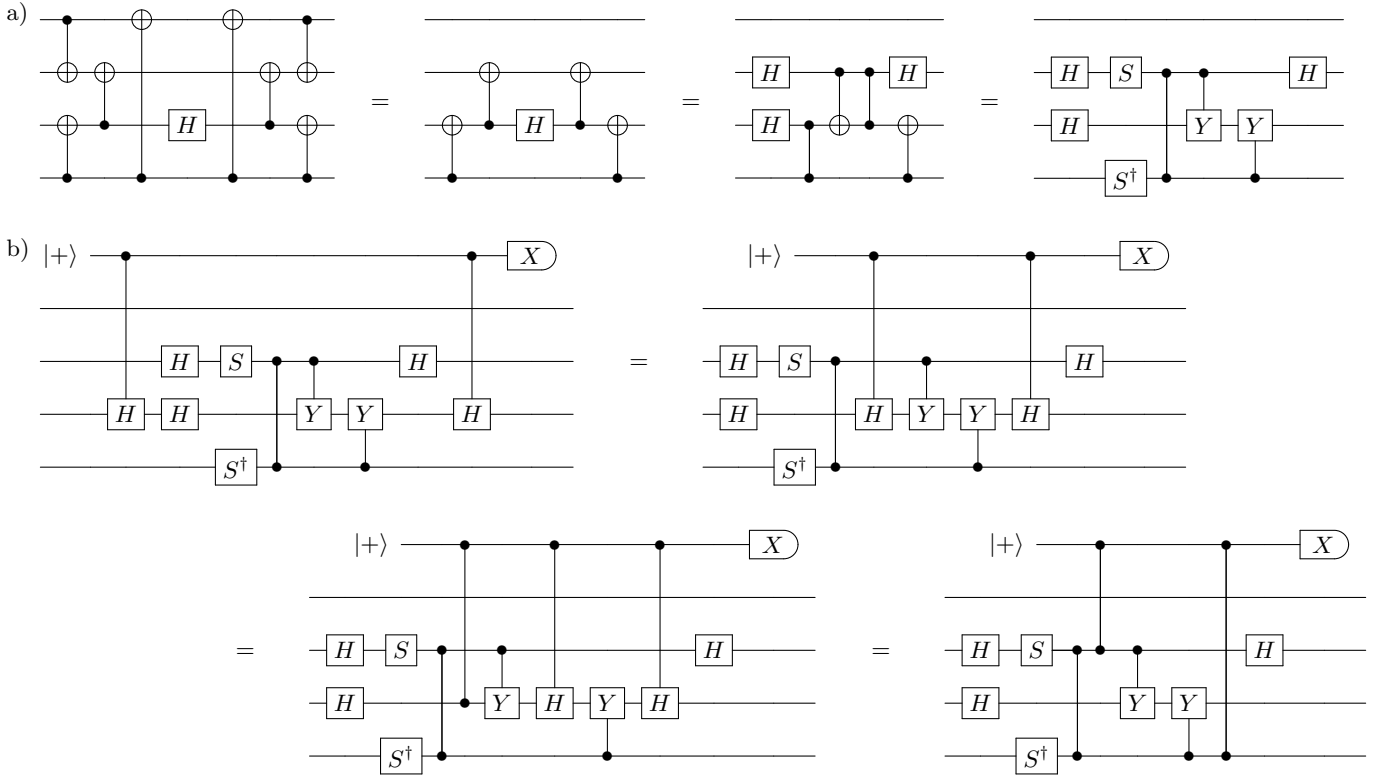

Fig.~\ref{fig:CircuitDetails} provides additional details about the circuit identities used in Fig.~\ref{fig:logicalMeas}.

\end{document}